\newtheorem{Proposition}{Proposition}
\renewcommand{\vec}[1]{\mathbf{#1}}
\def\blfootnote{\xdef\@thefnmark{}\@footnotetext}
\begin{document}
\title{\LARGE On Performance of RIS-Aided Fluid Antenna Systems}

\author{Farshad~Rostami~Ghadi,~\IEEEmembership{Member},~\textit{IEEE}, 
            Kai-Kit~Wong, \IEEEmembership{Fellow}, \textit{IEEE}, 
	    Wee~Kiat~New,~\IEEEmembership{Member},~\textit{IEEE}, 
	    Hao~Xu,~\IEEEmembership{Member},~\textit{IEEE}, 
	    Ross Murch,~\IEEEmembership{Fellow},~\textit{IEEE}, 
	    and Yangyang Zhang

\thanks{The work of F. Rostami Ghadi, W. K. New, H. Xu, and K. K. Wong is supported by the Engineering and Physical Sciences Research Council (EPSRC) under Grant EP/W026813/1.}
\thanks{F. Rostami Ghadi, W. K. New, H. Xu, and K. K. Wong are with the Department of Electronic and Electrical Engineering, University College London, London WC1E 6BT, U.K.}
\thanks{K.-K. Wong is also affiliated with Yonsei Frontier Laboratory, Yonsei University, Seoul, 03722, Korea. (e-mail: $\rm kai\text{-}kit.wong@ucl.ac.uk$).}
\thanks{R. Murch is with the Department of Electronic and Computer Engineering and Institute for Advanced Study (IAS), Hong Kong University of Science and Technology, Clear Water Bay, Hong Kong SAR, China.}
\thanks{Y. Zhang is with Kuang-Chi Science Limited, Hong Kong SAR, China.}

\thanks{Corresponding authors: F. Rostami Ghadi and Kai-Kit Wong.}

\vspace{-2 em}}

\maketitle

\begin{abstract}
This letter studies the performance of reconfigurable intelligent surface (RIS)-aided communications for a fluid antenna system (FAS) enabled receiver. Specifically, a fixed single-antenna base station (BS) transmits information through a RIS to a mobile user (MU) which is equipped with a planar fluid antenna in the absence of a direct link. We first analyze the spatial correlation structures among the positions (or ports) in the planar FAS, and then derive the joint distribution of the equivalent channel gain at the user by exploiting the central limit theorem. Furthermore, we obtain compact analytical expressions for the outage probability (OP) and delay outage rate (DOR). Numerical results illustrate that using FAS with only one activated port into the RIS-aided communication network can greatly enhance the performance, when compared to traditional antenna systems (TAS).
\end{abstract}

\begin{IEEEkeywords}
Delay outage rate, fluid antenna system, outage probability, reconfigurable intelligent surface, spatial correlation.
\end{IEEEkeywords}%\vspace{-3.5ex}

%\maketitle
%\blfootnote{\noindent Farshad Rostami Ghadi, Kai-Kit Wong, Wee Kiat New, and Hao Xu are with the Department of Electronic and Electrical Engineering, University College London, London WC1E 6BT, United Kingdom. Kai-Kit Wong is also with Yonsei Frontier Lab, Yonsei University, Seoul, 03722, Korea. (e-mail: $\{\rm f.rostamighadi,kai\text{-}kit.wong, a.new,hao.xu\}@ucl.ac.uk$).}
%
%\blfootnote{Corresponding author: Farshad Rostami Ghadi.}
\blfootnote{Digital Object Identifier 10.1109/XXX.2024.XXXXXXX}
	%\IEEEpeerreviewmaketitle

\section{Introduction}\label{sec-intro}  
\IEEEPARstart{I}{n recent} years, the reconfigurable intelligent surface (RIS) has emerged as a promising technology to greatly extend the coverage region in future wireless communication systems \cite{basar2019wireless}. RISs are artificial surfaces with a large number of low-cost reflecting elements that can control the wireless signal propagation environment to redirect radio waves from base stations (BSs) to targeted mobile users (MUs). One of the key drivers of RIS is that perpetual operation is becoming more feasible due to recent advances in low-power electronics \cite{abadal2020programmable}. Since its inception, much has been researched for RIS and great progress has been made towards estimating the cascaded channel state information (CSI) for optimization \cite{Wei-2021,Liu-2020}.

%However, conventional RIS encounters its own set of limitations. For instance, the complete phase control range for each metasurface is limited to less than $2\pi$ and the amplitude cannot be independently controlled. Furthermore, optimizing the phase shifters within the RIS proves to be exceedingly complex. Additionally, RIS typically comprises a vast number of metasurfaces, which in turn demands the estimation and availability of a substantial amount of channel state information (CSI) at the RIS \cite{elmossallamy2020reconfigurable}. It's important to note that while RIS aims to perform as an intelligent reflector, it lacks the computational capacity required for such intricate optimizations and the management of CSI estimation processes.

On the other hand, the fluid antenna system (FAS) has arisen as an enhancement method for multiple-input multiple-output (MIMO) by introducing a new degree of freedom via antenna position flexibility \cite{New-twc2023}. FAS represents all forms of movable and non-movable position-flexible antenna systems \cite{Wong-2020cl,wong2020fluid} and was first introduced by Wong {\em et al.}~in \cite{Wong-2022fcn,Wong-PartI2023}. Recent results have studied the achievable performance of FAS and reported impressive gains \cite{Khammassi-2023,Vega-2023,Psomas-dec2023}. Some of the related results came under the name of `movable' antenna systems that could be viewed as a particular example of FAS \cite{Zhu-Wong-2024}.

While RIS and FAS are complementary technologies to each other, the synergy between them is not well understood. The only result when considering both RIS and FAS was reported in \cite{Wong-PartIII2023} where RISs were used as artificial scatterers to generate more multipath so that FAS could work effectively in differentiating different user signals for multiuser communications. In \cite{Wong-PartIII2023}, RISs were randomized and not optimized. Different from the previous work, in this letter, we consider a single-user RIS-aided communication channel in which a single fixed-position antenna BS sends a message to a FAS-equipped user via an optimized RIS. The direct link is broken and the signal has to come through the RIS. Also, the user has a one-sided, planar FAS capable of switching to the best position in a prescribed two-dimensional (2D) space for reception.

The technical contributions of this letter are as follows. After characterizing the spatial correlation between the positions (also known as `ports') of the FAS, we first derive analytical expressions for the distributions of equivalent channels at the user by utilizing the Gaussian copula. Then we obtain compact expressions for the outage probability (OP) and delay outage rate (DOR). Our numerical results indicate that deploying FAS at the MU can make the RIS much more effective when compared to using a traditional antenna system (TAS). 
 
\begin{figure}[!t]
\centering
\includegraphics[width=1\columnwidth]{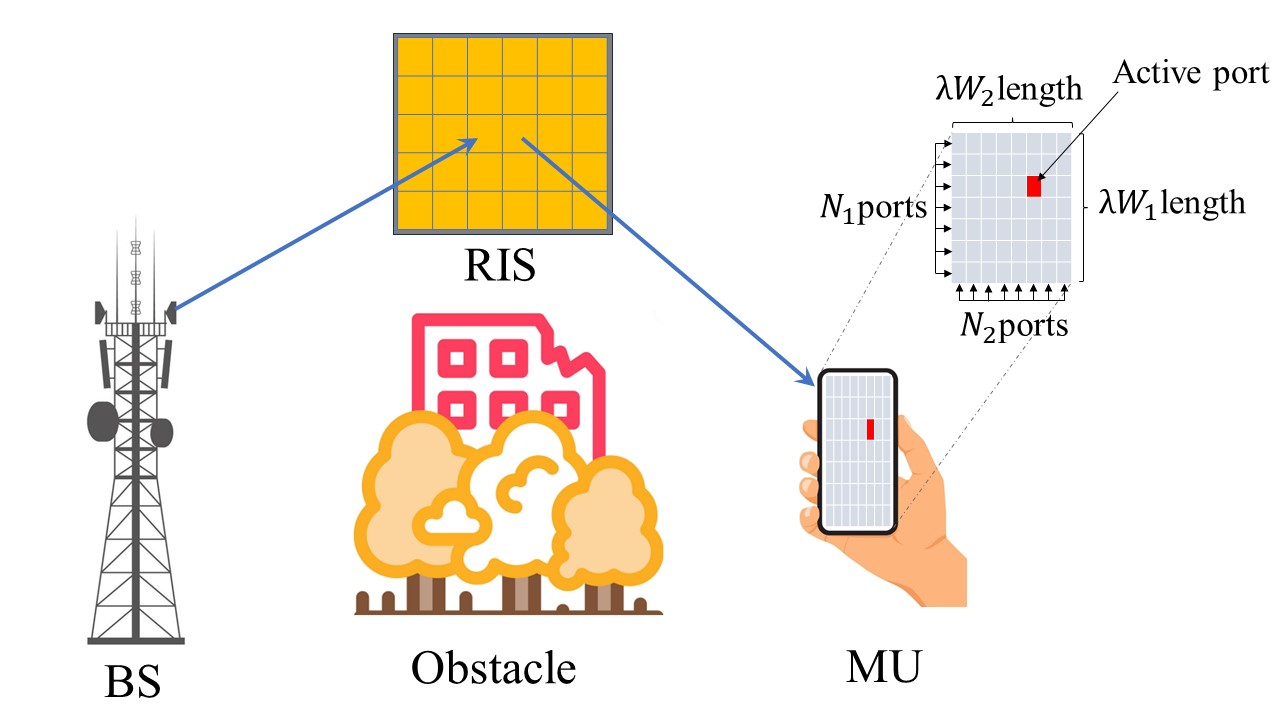}
\caption{A RIS-aided channel from a TAS BS to a FAS-equipped user.}\vspace{0cm}\label{fig-model}
 \end{figure}

\section{System Model}\label{sec-sys}
We consider a wireless communication scenario as shown in Fig. \ref{fig-model}, where a single-TAS BS sends a message $x$ with transmit power $P$ to a planar FAS-equipped MU with the help of a RIS that is composed of $M$ reflecting elements. Suppose that the RIS is there to establish the link between the BS and the user, as the direct link is blocked. We assume that perfect CSI is available at the RIS to optimally configure the phase shift matrix. Additionally, we consider a grid structure for the respective FAS so that $N_l$ ports are uniformly distributed along a linear space of length $W_l\lambda$ for $l\in\{1,2\}$, i.e., $N=N_1\times N_2$ and $W=W_1\lambda\times W_2\lambda$, where $\lambda$ denotes the wavelength associated with the carrier frequency. Moreover, an applicable mapping function as $\mathcal{F}\left(n\right)=\left(n_1,n_2\right)$ and $\mathcal{F}^{-1}\left(n_1,n_2\right)=n$ is supposed to convert the 2D indices to the one-dimensional (1D) index, in which $n\in\{1,\dots,N\}$ and $n_l\in\{1,\dots,N_l\}$. Therefore, the received signal at the $n$-th port of FAS of the user can be written as
\begin{align}
y_n=\underbrace{\tilde{\vec{h}}_{\mathrm{BS\text{-}RIS}}^T\vec{\Psi}\vec{h}_{\mathrm{RIS\text{-}MU},n}}_{\triangleq h_n}x+z_n,
\end{align}
 where $h_n$ represents the equivalent channel between the BS and the $n$-th FAS port at the user. More specifically, the vectors $\tilde{\vec{h}}_\mathrm{BS\text{-}RIS}=d_\mathrm{BS\text{-}RIS}^{-\alpha}\left[\tilde{h}_1\mathrm{e}^{-j\theta_1},\dots,\tilde{h}_M\mathrm{e}^{-j\theta_M}\right]^T\in\mathbb{C}^{M\times 1}$ and $\vec{h}_{\mathrm{RIS\text{-}MU},n}=d_\mathrm{RIS\text{-}MU}^{-\alpha}\left[h_{1,n}\mathrm{e}^{-j\beta_1},\dots,h_{M,n}\mathrm{e}^{-j\beta_M}\right]^T\in\mathbb{C}^{M\times 1}$ include the channel gains from the BS to the RIS (i.e., $\mathrm{BS\text{-}RIS}$) and from the RIS to the user (i.e., ($\mathrm{RIS\text{-}MU}$)), respectively. The terms $d_{\mathrm{BS\text{-}RIS}}$ and $d_{\mathrm{RIS\text{-}MU}}$ are the distances of $\mathrm{BS\text{-}RIS}$ and $\mathrm{RIS\text{-}MU}$, respectively, $\alpha>2$ denotes the path-loss exponent, $\theta_m$ and $\beta_m$ are the phases of the respective channel gains, and $\tilde{h}_m$ and $h_{m,n}$ define the amplitudes of the corresponding channel gains. Besides, the diagonal matrix $\vec{\Psi}=\text{diag}\left(\left[\zeta_1\mathrm{e}^{j\psi_1},\dots,\zeta_M\mathrm{e}^{j\psi_M}\right]\right)\in\mathbb{C}^{M\times M}$ contains the adjustable phases by the reflecting elements of the RIS in which $\zeta_m(\psi_m)=1$. Also, $z_n$ is the independent identically distributed (i.i.d.) additive white Gaussian noise (AWGN) with zero mean and variance $\sigma^2$ at each FAS port of the user. 
 
Given that the ports can be arbitrarily close to each other, the channel coefficients $h_n$ are spatially correlated. By considering the fact that a planar FAS can only have a $180^{\circ}$ reception coverage of the radio environment (i.e., \textit{half-space} in front),\footnote{A noticeable difference here is that in \cite{New-twc2023}, even a planar FAS is considered, the radio waves are assumed to come in all $360^{\circ}$ directions.} the spatial correlation between any two ports $n=\mathcal{F}^{-1}\left(n_1,n_2\right)$ and $\tilde{n}=\mathcal{F}^{-1}\left(\tilde{n}_1,\tilde{n}_2\right)$ can be characterized as 
\begin{multline}\label{eq-cov}
\varpi_{n,\tilde{n}}\coloneqq\mathrm{cov}\left\{n,\tilde{n}\right\}\\
={\rm sinc}\left(\frac{2}{\lambda}\sqrt{\left(\frac{|n_1-\tilde{n}_1|}{N_1-1}W_1\right)^2+\left(\frac{|n_2-\tilde{n}_2|}{N_2-1}W_2\right)^2}\right),
\end{multline}
 in which $\mathrm{sinc(t)=\frac{\sin(\pi t)}{\pi t}}$ is the sinc function. The details of the proof can be found in Appendix \ref{app1}. Hence, after applying the mapping function, the spatial correlation matrix $\vec{R}$ can be expressed as
\begin{align}
\vec{R}=\begin{bmatrix}
 	\varpi_{1,1} & \varpi_{1,2} &\dots& \varpi_{1,N}\\
 	\varpi_{2,1} & \varpi_{2,2} &\dots& \varpi_{2,N}\\ \vdots & \vdots & \ddots & \vdots\\
 	\varpi_{N,1} & \varpi_{N,2} &\dots& \varpi_{N,N}
\end{bmatrix}.
\end{align}
 
By assuming that only the optimal port that maximizes the received signal-to-noise ratio (SNR) at the MU is activated, the resulting SNR at the user can be found as
\begin{align}
	\gamma=\frac{P\left|\left[h\right]_{n^*}\right|^2}{\sigma^2}=\overline{\gamma}\left|\left[{h}_i\right]_{k_i^*}\right|^2,
\end{align} 
in which $\bar{\gamma}=\frac{P}{\sigma^2}$ is the average SNR and $n^*$ defines the index of the optimally selected port at the user, i.e., 
\begin{align}
n^*=\arg\underset{n}{\max}\left\{\left|\left[h\right]_{n}\right|^2\right\}.
\end{align}
The resulting channel gain at the user can be expressed as 
\begin{align}
h_{\mathrm{FAS}}^2=\max\left\{|h_{1}|^2,|h_{2}|^2,\dots,|h_{N}|^2\right\}.
\end{align}
%where $|h_n|^2$ is the channel gain at the $n$-th port of fluid antenna.

\section{Performance Analysis}
Here, we first characterize the cumulative distribution function (CDF) and probability density function (PDF) of the equivalent channel at the MU by exploiting the central limit theorem (CLT) and copula theory. Then we derive the OP and DOR in compact analytical expressions. 

\subsection{Statistical Characterization}
In order to determine the statistical distribution of the SNR at the user, we first need to obtain the marginal distribution  of $h_\mathrm{FAS}^2$ in the presence of the RIS. To this end, the fading channel coefficient can be rewritten as
\begin{align}
\left|h_{n}\right|^2&=\frac{\left|\sum_{m=1}^M\tilde{h}_mh_{m,n}\mathrm{e}^{-j(\psi_m-\theta_m-\beta_m)}\right|^2}{d_\mathrm{BS\text{-}RIS}^\alpha d_\mathrm{RIS\text{-}MU}^\alpha}\\
&\hspace{-.5mm}\overset{(a)}{=}\frac{\left(\sum_{m=1}^M\tilde{h}_mh_{m,n}\right)^2}{d_\mathrm{BS\text{-}RIS}^\alpha d_\mathrm{RIS\text{-}MU}^\alpha}=\frac{A^2}{d_\mathrm{BS\text{-}RIS}^\alpha d_\mathrm{RIS\text{-}MU}^\alpha},
\end{align}
in which $(a)$ is obtained from the assumption of perfect CSI for RIS configuration, which enables ideal phase shifting, i.e., $\psi_m=\theta_m+\beta_m$. Then, by the CLT for a large number of reflecting elements, i.e., $M\gg1$, $A=\sum_{m=1}^M\eta_m\kappa_m$ can be accurately approximated as a Gaussian random variable with $\mu_A=\frac{M\pi}{4}$ and variance of $\sigma^2_A=M\left(1-\frac{\pi^2}{16}\right)$ \cite{basar2019wireless}. Thus, $A^2$ is a non-central chi-square random variable with one degree of freedom and has the marginal CDF and PDF as \cite{proakis2008digital}
\begin{align}
F_{A^2}(a) = 1-Q_{\frac{1}{2}}\left(\sqrt{\frac{\tau}{\sigma_A^2}},\sqrt{\frac{a}{\sigma_A^2}}\right),
\end{align}
and
\begin{align}
f_{A^2}(a) = \frac{1}{2\sigma^2_\mathrm{A}}\left(\frac{a}{\tau}\right)^{-\frac{1}{4}}\mathrm{exp}\left(-\frac{a+\tau}{2\sigma^2}\right)\mathcal{I}_{-\frac{1}{2}}\left(\frac{\sqrt{a\tau}}{\sigma_A^2}\right),
\end{align}
where $\tau=\mu_\mathrm{A}^2$ is the non-centrality parameter and $Q_l(v,w)$ denotes the Marcum $Q$-function with order $l$. Therefore, the CDF of $h^2_\mathrm{FAS}(r)$ can be derived as
\begin{align}
F_{h_{{\mathrm{FAS}}}^2}(r)&=\Pr\left(h_{{\mathrm{FAS}}}^2\leq r\right)\notag\\
&=\Pr\left(\max\left\{|h_{1}|^2,\dots,|h_{N}|^2\right\}\leq r\right)\notag\\\notag
&=\Pr\left(\max\left\{A^2,\dots, A^2\right\}\leq r\tilde{d}\right)\\
&=F_{A^2,\dots,A^2}\left(r\tilde{d},\dots,r\tilde{d}\right),\label{eq-cdf1}
\end{align}
in which $\tilde{d}= d_\mathrm{BS\text{-}RIS}^\alpha d_\mathrm{RIS\text{-}MU}^\alpha$. Furthermore, to obtain the joint multivariate distributions provided in \eqref{eq-cdf1}, we exploit Sklar's theorem, which can accurately connect the joint multivariate distributions of arbitrary random variables to their marginal distribution with the help of a copula function. As a consequence, by exploiting the analytical results in \cite{ghadi2023gaussian}, $F_{h_{{\mathrm{FAS}}}^2}(r)$ can be derived as
\begin{align}
F_{h_{{\mathrm{FAS}}}^2}(r)=C\left(F_{A^2}\left(r\tilde{d}\right),\dots,F_{A^2}\left(r\tilde{d}\right);\vartheta_C\right),\label{eq-gen-cdf}
\end{align}
where $C(u_1,\dots,u_N)$ denotes the $N$-dimensional copula on the unit hypercube $[0,1]^N$ with uniformly distributed random variables over $[0,1]$ and $\vartheta_C$ is the copula parameter that can control the degree of dependency between fluid antenna ports. Then, by applying the chain rule to \eqref{eq-gen-cdf}, the corresponding PDF, $f_{h_{{\mathrm{FAS}}}^2}(r)$, can be obtained as
\begin{align}\label{eq-gen-pdf}
f_{h^2_{\mathrm{FAS}}}(r)=\prod_{\hat{n}=1}^{N}f_{A_{\hat{n}}^2}\left(r\tilde{d}\right) c\left(F_{A^2}\left(r\tilde{d}\right),\dots,F_{A^2}\left(r\tilde{d}\right);\vartheta_C\right),
\end{align}
where $c(u_1,\dots,u_N)$ represents the copula density function. 

Note that \eqref{eq-gen-cdf} and \eqref{eq-gen-pdf} are mathematically applicable under any choice $C$ that includes the copula properties \cite[Remark 3]{rostami2024physical}. In this regard, it was understood that the Gaussian copula is an appropriate model that can accurately describe the spatial correlation between fluid antenna ports and make the mathematical analysis more tractable. Therefore, $F_{h_{{\mathrm{FAS}}}^2}(r)$ and $f_{h_{{\mathrm{FAS}}}^2}(r)$ can be, respectively, expressed as 
\begin{multline}\label{eq-cdf-g}
F_{h_{{\mathrm{FAS}}}^2}(r)\\
=\Phi_\vec{R}\left(\varphi^{-1}\left(F_{A^2}\left(r\tilde{d}\right)\right),\dots,\varphi^{-1}\left(F_{A^2}\left(r\tilde{d}\right)\right);\vartheta_G\right)
\end{multline}
and
\begin{multline}\label{eq-pdf-gaussian}
f_{h_\mathrm{FAS}^2}\left(r\right)=\prod_{\hat{n}=1}^Nf_{A^2_{\hat{n}}}\left(r\tilde{d}\right)\\
\times\frac{\exp\left(-\frac{1}{2}\left(\boldsymbol{\vec{\varphi}}^{-1}_{A^2}\right)^T\left(\vec{R}^{-1}-\vec{I}\right)\boldsymbol{\vec{\varphi}}^{-1}_{A^2}\right)}{\sqrt{{\rm det}\left(\vec{R}\right)}},
\end{multline}
in which $\mathrm{det}\left(\vec{R}\right)$ denotes the determinant of the correlation matrix $\vec{R}$, $\vec{I}$ is the identity matrix, and $\varphi^{-1}\left(F_{A^2}\left(r\tilde{d}\right)\right)=\sqrt{2}\mathrm{erf}^{-1}\left(2F_{A^2}\left(r\tilde{d}\right)-1\right)$ is the quantile function of the standard normal distribution, where $\mathrm{erf}^{-1}\left(\cdot\right)$ is the inverse of the error function $\mathrm{erf}\left(z\right)=\frac{2}{\sqrt{\pi}}\int_0^z\mathrm{e}^{-t^2}dt$. The term  $\Phi_\vec{R}(\cdot)$ is the joint CDF of the multivariate normal distribution with zero mean vector and correlation matrix $\vec{R}$,  $\vartheta_{G}$ denotes the correlation parameter of the Gaussian copula, and $\boldsymbol{\vec{\varphi}}^{-1}_{A^2}=\left[\varphi^{-1}\left(F_{A^2}\left(r\tilde{d}\right)\right),\dots,\varphi^{-1}\left(F_{A^2}\left(r\tilde{d}\right)\right)\right]^T$. Hence, $F_{h_{{\mathrm{FAS}}}^2}(r)$ and $f_{h_{{\mathrm{FAS}}}^2}(r)$ are rewritten as \eqref{eq-cdf} and \eqref{eq-pdf} (see top of the next page), where we also have \eqref{eq-phi}.

\begin{figure*}[t]
\normalsize
%\hrulefill
%\setcounter{equation}{12}
\begin{align}\label{eq-cdf}
F_{g_{\mathrm{FAS}}}(r)= \Phi_{\vec{R}}\left(\sqrt{2}\mathrm{erf}^{-1}\left(1-2Q_{\frac{1}{2}}\left(\sqrt{\frac{\tau}{\sigma_A^2}},\sqrt{\frac{a}{\sigma_A^2}}\right)\right),\dots,\sqrt{2}\mathrm{erf}^{-1}\left(1-2Q_{\frac{1}{2}}\left(\sqrt{\frac{\tau}{\sigma_A^2}},\sqrt{\frac{a}{\sigma_A^2}}\right)\right);\vartheta_{G}\right)
\end{align}
\hrulefill
\begin{align}\label{eq-pdf}
f_{h_\mathrm{FAS}^2}\left(r\right)=\left[\frac{1}{2\sigma^2_\mathrm{A}}\left(\frac{a}{\tau}\right)^{-\frac{1}{4}}\mathrm{exp}\left(-\frac{a+\tau}{2\sigma^2}\right)\mathcal{I}_{-\frac{1}{2}}\left(\frac{\sqrt{a\tau}}{\sigma_A^2}\right)\right]^N
\frac{\exp\left(-\frac{1}{2}\left(\boldsymbol{\vec{\varphi}}^{-1}_{A^2}\right)^T\left(\vec{R}^{-1}-\vec{I}\right)\boldsymbol{\vec{\varphi}}^{-1}_{A^2}\right)}{\sqrt{{\rm det}\left(\vec{R}\right)}}
\end{align}
\hrulefill
\begin{align}\label{eq-phi}
\boldsymbol{\vec{\varphi}}^{-1}_{A^2}=\left[\sqrt{2}\mathrm{erf}^{-1}\left(1-2Q_{\frac{1}{2}}\left(\sqrt{\frac{\tau}{\sigma_A^2}},\sqrt{\frac{a}{\sigma_A^2}}\right)\right),\dots,\sqrt{2}\mathrm{erf}^{-1}\left(1-2Q_{\frac{1}{2}}\left(\sqrt{\frac{\tau}{\sigma_A^2}},\sqrt{\frac{a}{\sigma_A^2}}\right)\right)\right]^T
\end{align}
\hrulefill
\end{figure*}

%In particular, this theorem postulates that for $k$ RVs $S_1,\dots, S_k$ with marginal distribution $F_{S_i}(s_i)$ for $i\in\{1,\dots,k\}$, there exists one copula function $C$ in the extended real line domain $\mathbb{R}$ that can generate the joint CDF of $S_i$ as follows
%\begin{align}
%F_{S_1,\dots,S_k}(s_1,\dots,s_k) = C\left(F_{S_1}(s_1),\dots,F_{S_k}(s_k)\right),
%\end{align}
%in which $C(u_1,\dots,u_k)$ is the $k$-dimension copula of RVs $S_i$ on the unit hypercube $[0,1]^k$ with uniformly distributed RVs $U_i:=F_{S_i}(s_i)$ over $[0,1]$, i.e.,
%\begin{align}
%C\left(u_1,\dots,u_k\right)=\Pr\left(U_1<u_1,\dots U_k<u_k\right)
%\end{align}
%where $u_i=F_{S_i}(s_i)$.
%The under Clayton
%	\begin{align}\notag
%	&F_{h_{{\mathrm{FAS}}}^2}(r)=\\
%	&\left[\sum_{n=1}^N \left(\left(1-Q_{\frac{1}{2}}\left(\frac{\tau}{\sigma_A},\sqrt{\frac{r\tilde{d}}{\sigma_A^2}}\right)\right)^{-\beta}-1\right)+1\right]^{-\frac{1}{\beta}}\hspace{-3mm},
%\end{align}

\subsection{OP Analysis}
The OP is defined as the probability that the received SNR is less than a given SNR threshold $\gamma_\mathrm{th}$, i.e.,
$P_\mathrm{out}=\Pr\left(\gamma\le\gamma_\mathrm{th}\right)$. Using the results obtained above, the OP for the considered system model is provided in the following proposition. 

\begin{Proposition}
The OP for the considered RIS-aided FAS is given by
\begin{align}
P_\mathrm{out}=F_{h_{{\mathrm{FAS}}}^2}\left(\frac{\gamma_\mathrm{th}\tilde{d}}{\overline{\gamma}}\right),
\end{align}
where $F_{h_{{\mathrm{FAS}}}^2}(r)$ is defined in \eqref{eq-cdf}.
\end{Proposition}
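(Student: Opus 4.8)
The plan is to derive the proposition by a one-line substitution of the instantaneous SNR into the definition of the outage probability, followed by a monotone rearrangement that recasts the outage event in the exact form whose distribution was already computed as \eqref{eq-cdf}. Concretely, I would begin from $P_\mathrm{out}=\Pr(\gamma\le\gamma_\mathrm{th})$ and insert the SNR at the optimally selected port, $\gamma=\overline{\gamma}\,|[h]_{n^*}|^2$. By the definition of $n^*$ as the maximizing index, $|[h]_{n^*}|^2$ is precisely the FAS channel gain $h_{\mathrm{FAS}}^2=\max\{|h_1|^2,\dots,|h_N|^2\}$, so the outage event becomes $\{\overline{\gamma}\,h_{\mathrm{FAS}}^2\le\gamma_\mathrm{th}\}$.

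Since $\overline{\gamma}>0$, dividing through is a strictly increasing operation and therefore preserves the event, giving $P_\mathrm{out}=\Pr\big(h_{\mathrm{FAS}}^2\le\gamma_\mathrm{th}/\overline{\gamma}\big)$, which is by definition $F_{h_{\mathrm{FAS}}^2}$ evaluated at the threshold. The only point requiring care is the path-loss normalization $\tilde{d}=d_\mathrm{BS\text{-}RIS}^\alpha d_\mathrm{RIS\text{-}MU}^\alpha$ that links the per-port gain $|h_n|^2$ to the non-central chi-square amplitude $A^2$ used in the marginals: matching the threshold to the convention adopted in \eqref{eq-cdf1}, where $F_{h_{\mathrm{FAS}}^2}(\cdot)$ rescales its argument by $\tilde{d}$ before evaluating $F_{A^2}$, is what produces the factor $\tilde{d}$ in the stated argument $\gamma_\mathrm{th}\tilde{d}/\overline{\gamma}$.

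Finally I would identify $P_\mathrm{out}=F_{h_{\mathrm{FAS}}^2}(\gamma_\mathrm{th}\tilde{d}/\overline{\gamma})$ and substitute the closed-form Gaussian-copula CDF \eqref{eq-cdf}, which is already established, so no further analytical work is needed. The main ``obstacle'' is thus purely bookkeeping---keeping the $\tilde{d}$ normalization consistent between the SNR scaling and the marginal CDF---rather than any genuine analytical difficulty; once the monotone change of threshold is performed the result is immediate.
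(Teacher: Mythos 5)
Your proposal is correct and takes essentially the same route as the paper, whose proof likewise consists of applying the definition of the OP together with the change of variable $\gamma=\overline{\gamma}A^2/\tilde{d}$ (equivalently, $\gamma=\overline{\gamma}\,h_{\mathrm{FAS}}^2$ with $|h_n|^2=A^2/\tilde{d}$). The $\tilde{d}$ bookkeeping you flag is indeed the only nontrivial point: the closed form \eqref{eq-cdf} is written in terms of the raw argument fed to $F_{A^2}$, so the threshold must be pre-scaled by $\tilde{d}$ exactly as you describe.
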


\begin{proof}
By definition of the OP and then applying the transformation $\gamma=\frac{A^2\overline{\gamma}}{\tilde{d}}$, the proof is completed. 
\end{proof}

\subsection{DOR Analysis}
The DOR is defined as the time it takes to successfully transmit a specific amount of data $R$ over a wireless channel with a bandwidth $B$, surpassing a predefined threshold duration $T_\mathrm{th}$, i.e., $\Pr\left(T_\mathrm{dt}>T_\mathrm{th}\right)$ in which
\begin{align}
T_\mathrm{dt}=\frac{R}{B\log_2\left(1+\gamma\right)}
\end{align}
indicates the delivery time \cite{ghadi2023gaussian}. The DOR for the considered system model is presented in the following proposition. 

\begin{Proposition}
The DOR for the considered RIS-aided FAS is given by
\begin{align}
P_\mathrm{dor}=F_{h^2_\mathrm{FAS}}\left(\frac{\tilde{d}\left(\mathrm{e}^{\frac{R\ln 2}{B T_\mathrm{th}}}-1\right)}{\overline{\gamma}}\right).
\end{align}
\end{Proposition}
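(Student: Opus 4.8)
The plan is to mirror the short argument already used for the OP proposition, since the DOR is again just a probability on a monotone transformation of the SNR $\gamma$. First I would expand the definition $P_\mathrm{dor}=\Pr\left(T_\mathrm{dt}>T_\mathrm{th}\right)$ by substituting $T_\mathrm{dt}=\frac{R}{B\log_2\left(1+\gamma\right)}$, which gives $P_\mathrm{dor}=\Pr\left(\frac{R}{B\log_2\left(1+\gamma\right)}>T_\mathrm{th}\right)$. The whole proof then reduces to rewriting this event as a one-sided event on $\gamma$ and invoking the CDF derived in \eqref{eq-cdf}.

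The next step is to solve the delay event explicitly for $\gamma$. Since $R,B,T_\mathrm{th}>0$ and $\log_2(1+\gamma)>0$ whenever $\gamma>0$, the map $\gamma\mapsto T_\mathrm{dt}$ is strictly decreasing, so the inequality inverts: $T_\mathrm{dt}>T_\mathrm{th}$ is equivalent to $\log_2(1+\gamma)<\frac{R}{BT_\mathrm{th}}$, i.e.\ $\gamma<2^{R/(BT_\mathrm{th})}-1$. Rewriting $2^{x}=\mathrm{e}^{x\ln 2}$ produces the clean threshold $\gamma<\mathrm{e}^{\frac{R\ln 2}{BT_\mathrm{th}}}-1$. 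The only real obstacle here, and a mild one, is bookkeeping: I must track the direction of the inequality through the inversion and confirm the positivity and monotonicity of $\log_2(1+\gamma)$ so that no spurious branch or sign error is introduced. Intuitively this should pass a sanity check, since a long delivery time ($T_\mathrm{dt}>T_\mathrm{th}$) must correspond to low SNR, and indeed the derived event is $\{\gamma<\text{threshold}\}$.

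Finally I would apply the same change of variable as in the OP proof, $\gamma=\frac{A^2\overline{\gamma}}{\tilde{d}}$ (equivalently $\gamma=\overline{\gamma}\,h^2_\mathrm{FAS}$), to convert the SNR probability into an evaluation of the channel-gain CDF. This yields $P_\mathrm{dor}=\Pr\left(\gamma<\mathrm{e}^{\frac{R\ln 2}{BT_\mathrm{th}}}-1\right)=F_{h^2_\mathrm{FAS}}\!\left(\frac{\tilde{d}\left(\mathrm{e}^{\frac{R\ln 2}{BT_\mathrm{th}}}-1\right)}{\overline{\gamma}}\right)$, which is exactly the claimed expression, with $F_{h^2_\mathrm{FAS}}$ given by \eqref{eq-cdf}. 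The strict-versus-weak inequality is immaterial because the distribution of $h^2_\mathrm{FAS}$ is continuous. Since everything reduces to elementary algebra plus the previously established CDF, I expect no genuine difficulty beyond the inequality-inversion step in the middle.
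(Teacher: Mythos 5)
Your proposal is correct and matches the argument the paper intends: the paper actually omits the proof of this proposition, and the natural derivation is exactly your inversion of the monotone map $\gamma\mapsto T_\mathrm{dt}$ followed by the same substitution $\gamma=\frac{A^2\overline{\gamma}}{\tilde{d}}$ used in the OP proof, which places the $\tilde{d}$ factor in the argument of $F_{h^2_\mathrm{FAS}}$ consistently with how the paper evaluates that CDF in \eqref{eq-cdf}. No gaps.
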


\section{Numerical Results}
In this section, we present the numerical results to evaluate the performance of the proposed RIS-aided FAS in terms of the OP and DOR. To this end, we set the channel model parameters as $\sigma^2=120~{\rm dBm}$, $\alpha=2.5$, $d_\mathrm{BS\text{-}RIS}=d_\mathrm{RIS\text{-}MU}=2~{\rm km}$, $\gamma_\mathrm{th}=0~{\rm dB}$, $T_\mathrm{th}=3~{\rm ms}$, $R=3~{\rm kbits}$, $B=2~{\rm MHz}$, and $P=15~{\rm dBm}$. It is worth noting that the multivariate normal distribution in \eqref{eq-cdf} and the corresponding Gaussian copula are implemented numerically, exploiting the mathematical package and algorithm of the MATLAB programming language. Besides, TAS is considered as a benchmark for comparison.

Figs.~\ref{fig-o-p-n} and \ref{fig-o-p-w} illustrate the impact of transmit power $P$ on the OP behavior for different values of the fluid antenna port $N$ and fluid antenna size $W$, respectively. It is clearly seen that considering a larger number of ports, e.g., $N=4\times4$, or a higher value of fluid antenna size, e.g., $W=3\lambda\times3\lambda$, can significantly enhance the OP performance compared with the TAS. Moreover, we can observe that such an improvement becomes more noticeable when a larger number of RIS elements $M$ is considered. The key reason behind the behavior of the OP in Fig.~\ref{fig-o-p-n} is that while increasing $N$ with a constant $W$ enhances the spatial correlation among fluid antenna ports, it also has the potential to boost channel capacity, diversity gain, and spatial multiplexing simultaneously. Consequently, this can alleviate fading effects and improve the overall quality of the links. Furthermore, the primary factor influencing the behavior observed in Fig.~\ref{fig-o-p-w} is that increasing $W$ while maintaining a constant $N$ can reduce the spatial separation between the fluid antenna ports; thereby, the spatial correlation decreases and a lower OP is therefore achieved.

\begin{figure}
\centering
%\hspace{-1.5cm}
\subfigure[]{%
\includegraphics[width=0.25\textwidth]{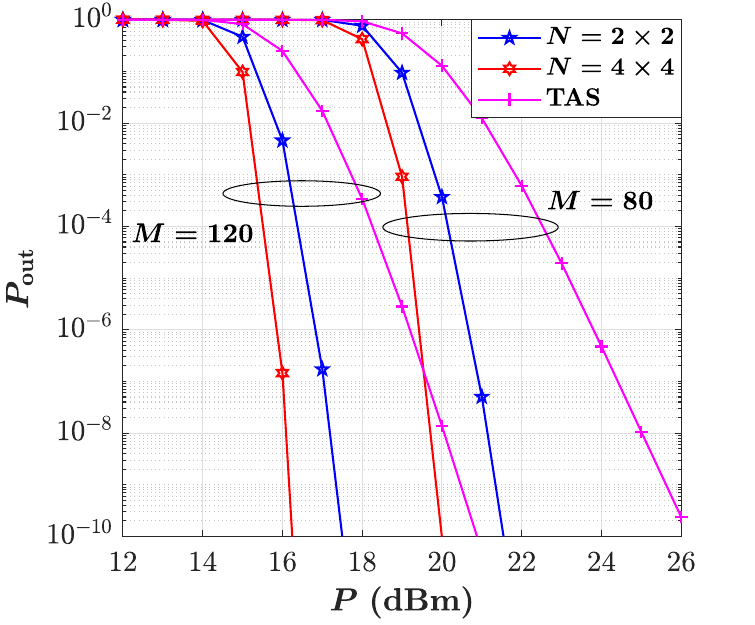}\label{fig-o-p-n}%
}%\hspace{-0.35cm}%or more
\subfigure[]{%
\includegraphics[width=0.25\textwidth]{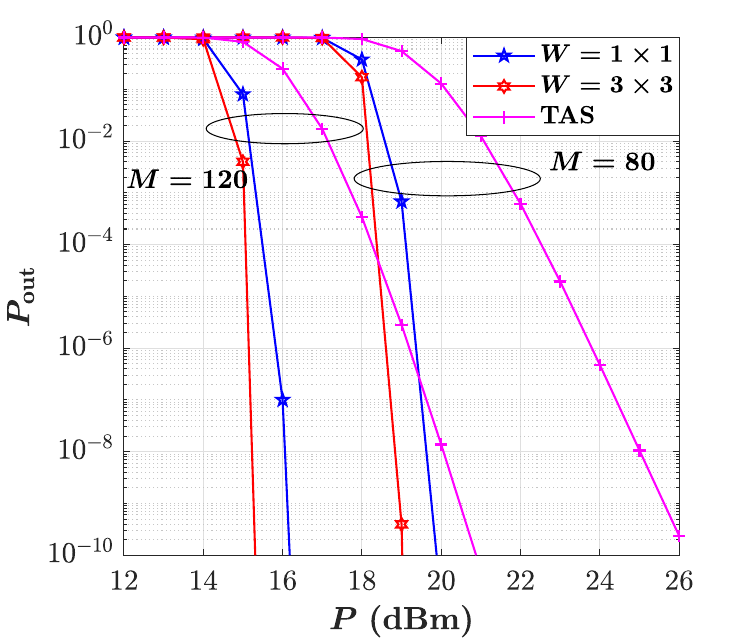}\label{fig-o-p-w}%
}%\hspace{-0.4cm}
%\hspace{-1cm}
\vspace{-0.2cm}
\caption{OP versus $P$ (a) for selected values of $M$ and $N$ when $W=1\lambda\times 1\lambda$, and (b) for selected values of $M$ and $W$ when $N=25$.}\label{fig-p}\vspace{-0.5cm}
\end{figure}

To gain more insights into how the number of ports, fluid antenna size, and the number of RIS elements affect system performance, Fig.~\ref{fig-num} is presented. In Fig.~\ref{fig-o-n1}, we can observe that as $N$ steadily increases, the OP initially improves and then becomes saturated to a floor. This is because the spatial correlation between fluid antenna ports increases as $N$ grows under a constant $W$, and thus, a reduction in diversity gain is experienced until it eventually reaches saturation. Moreover, it can be recognized in Fig.~\ref{fig-o-w1} that increasing $W$ under a given large $N$ enhances the OP performance without limitation but at a slow rate for a larger $W$. This behavior is reasonable due to the reduction of spatial correlation by increasing the size of the fluid antenna. However, the interesting point is that such improvements in the OP performance based on $N$ and $W$ variations are achieved if the number of RIS elements $M$ is sufficiently large. For instance, in Fig.~\ref{fig-o-w1}, under a given $M=125$, the OP decreases from the order of $10^{-3}$ to the order of $10^{-7}$ when $W$ raises from $1$ to $9 ~(\lambda^2)$, while under $M=105$, the OP almost remains constant as $W$ increases. Hence, as shown in Figs.~\ref{fig-o-m-n1} and \ref{fig-o-m-w1}, the OP performance for the RIS-aided FAS and RIS-aided TAS is almost similar for a small number of RIS elements, but the impact of FAS on the RIS-aided communication system becomes more remarkable compared with TAS when $M$ continuously increases.

%\begin{figure}[!t]
%	\centering
%	\includegraphics[width=0.7\columnwidth]{o_p_n.eps}
%	\caption{OP versus $P$ for selected values of $M$ and $N$ when $W=1\lambda^2$.}\vspace{0cm}
%	\label{fig-o-p-n}\vspace{-0.5cm}
%\end{figure}
%\begin{figure}[!t]
%	\centering
%	\includegraphics[width=0.7\columnwidth]{o_p_w1.eps}
%	\caption{OP versus $P$ for selected values of $M$ and $W$ when $N=25$.}\vspace{0cm}
%	\label{fig-o-p-w}
%\end{figure}

Fig.~\ref{fig-dor} illustrates the behavior of the DOR in terms of the bandwidth $B$ and the amount of data $R$ for different values of $N$, $W$, and $M$. As expected, it is evident in Figs.~\ref{fig-d-b-n1} and \ref{fig-d-b-w1} that a preset data amount can be transmitted with reduced delay as the channel bandwidth increases, where such an achievement becomes more noticeable when the number of RIS elements is larger. Moreover, we can observe that the DOR performance remarkably enhances when $N$ or $W$ grows, which implies that data transmission in the RIS-aided FAS results in lower delays compared to the TAS counterpart. In Figs.~\ref{fig-d-r-n1} and \ref{fig-d-r-w1}, we can also see that the DOR performance weakens by increasing the transmitted data $R$ under a fixed bandwidth $B=2~{\rm MHz}$ because transmission becomes almost impossible at the high date rate; consequently, the latency increases drastically. Nevertheless, it can be observed that by applying FAS instead of TAS in the RIS-aided communication scenario, we can prevent the increase in transmission delay. For instance, sending $R=5~{\rm kbits}$ amounts of data with a low DOR is nearly impossible when TAS is considered at the user, but it can be sent with a small DOR (e.g., in the order of $10^{-8}$ as shown in Fig.~\ref{fig-d-r-n1}) when the FAS is used at the user instead. Therefore, from the ultra reliable low latency communications (URLLC) perspective, FAS with only one active port overtakes TAS in RIS-assisted communication systems.

%\vspace{-0.5cm}
\begin{figure*}
\centering
%\hspace{-1.5cm}
\subfigure[]{%
\includegraphics[width=0.25\textwidth]{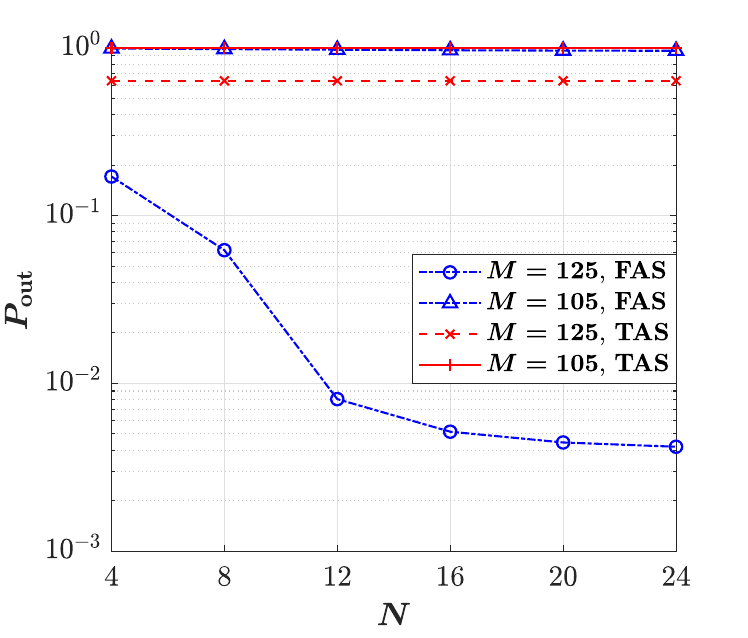}\label{fig-o-n1}%
}%\hspace{-0.35cm}%or more
\subfigure[]{%
\includegraphics[width=0.25\textwidth]{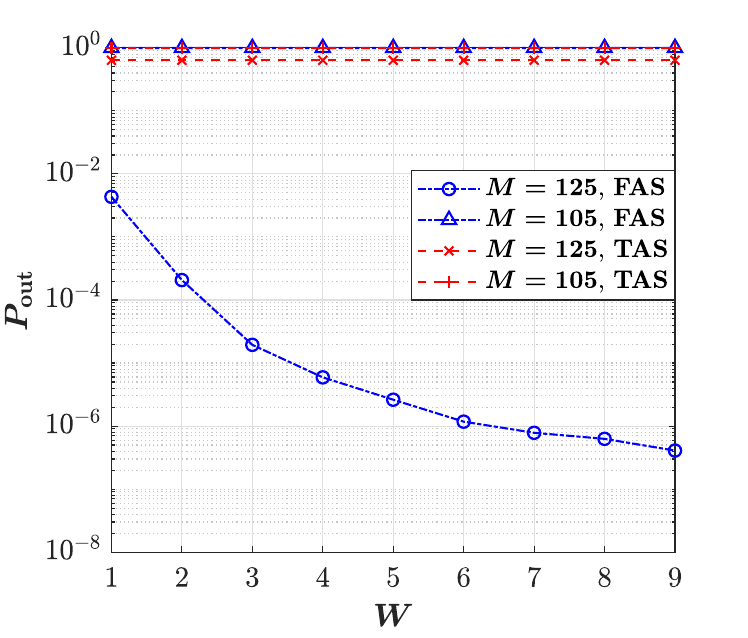}\label{fig-o-w1}%
}%\hspace{-0.4cm}
\subfigure[]{%
\includegraphics[width=0.25\textwidth]{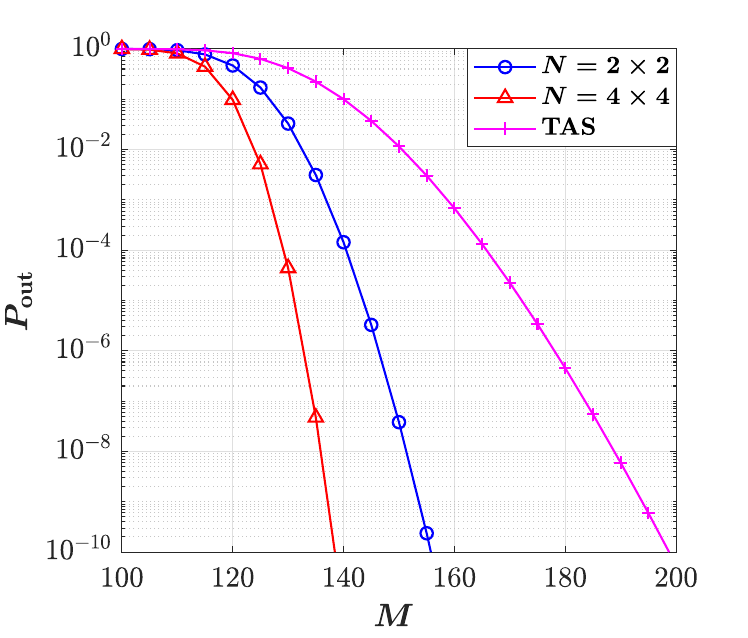}\label{fig-o-m-n1}%
}%\hspace{-0.4cm}
\subfigure[]{%
\includegraphics[width=0.25\textwidth]{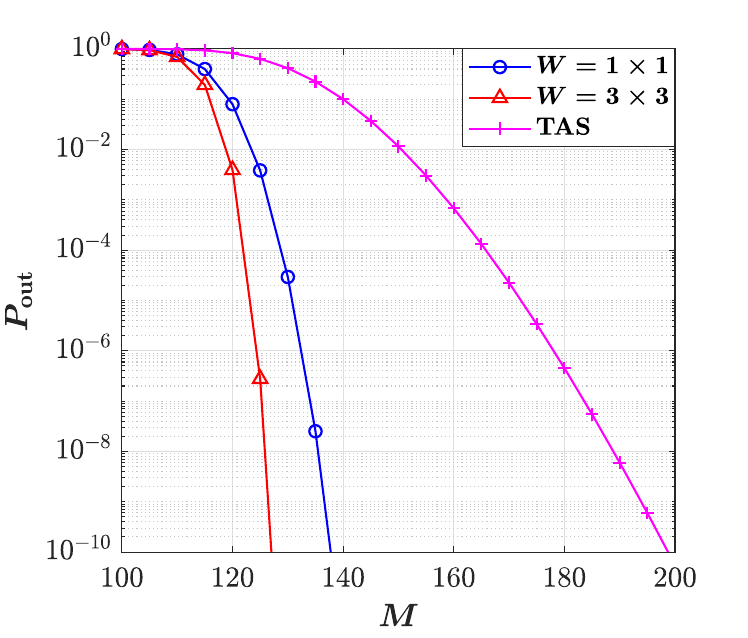}\label{fig-o-m-w1}%
}%\hspace{-1cm}
\vspace{-0.2cm}
\caption{OP versus (a) the number of fluid antenna ports $N$ for a fixed $W=1\lambda\times 1\lambda$, (b) the size of fluid antenna $W$ for a fixed $N=25$, (c) the number of RIS elements $M$ for a fixed $W=1\lambda\times1\lambda$, and (d) the number of RIS elements $M$ for a fixed $N=25$.}\label{fig-num}
%\vspace{-0.5cm}
\end{figure*}
\begin{figure*}
\centering
%\hspace{-1.5cm}
\subfigure[]{%
\includegraphics[width=0.25\textwidth]{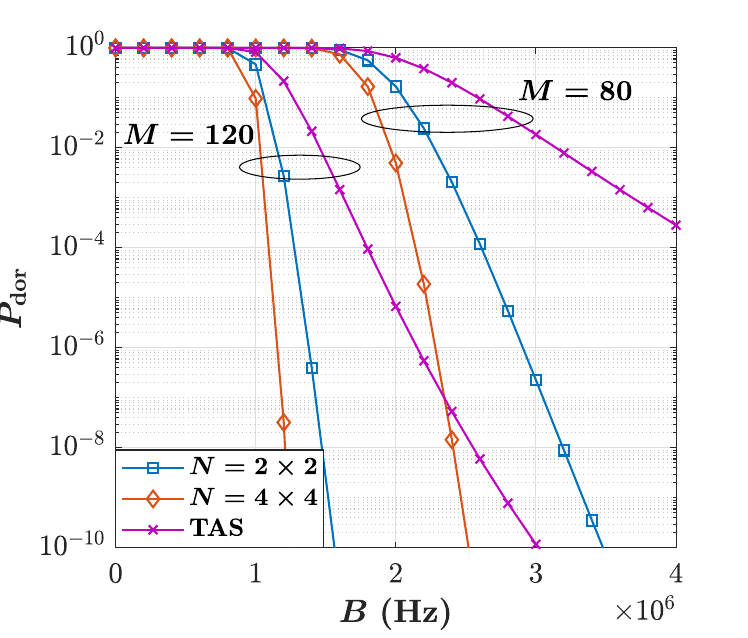}\label{fig-d-b-n1}%
}%\hspace{-0.35cm}%or more
\subfigure[]{%
\includegraphics[width=0.25\textwidth]{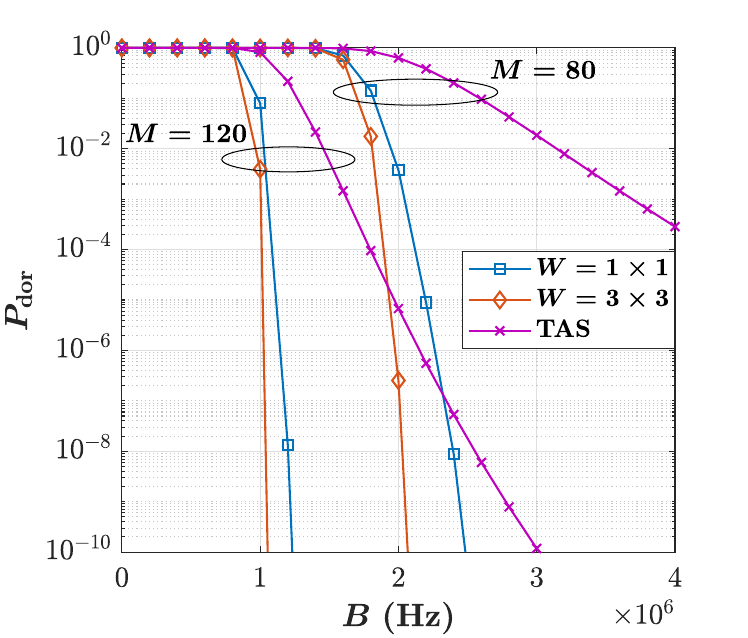}\label{fig-d-b-w1}%
}%\hspace{-0.4cm}
\subfigure[]{%
\includegraphics[width=0.25\textwidth]{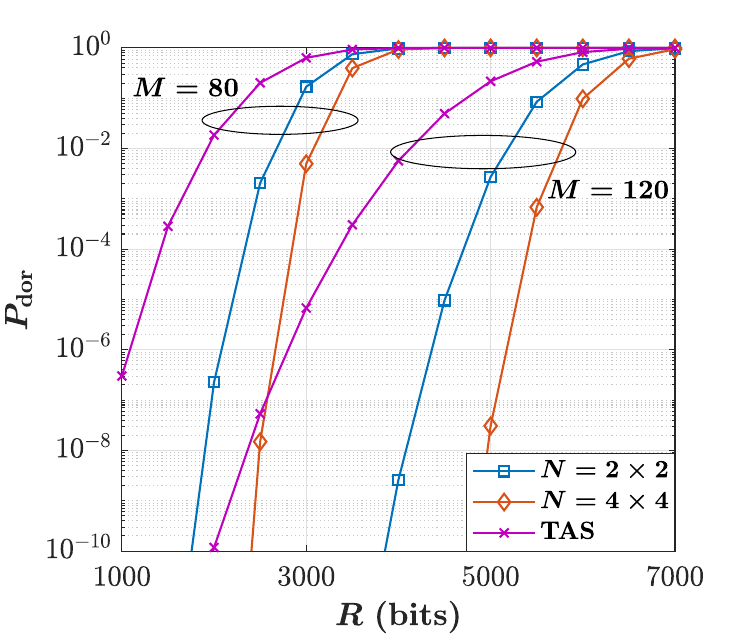}\label{fig-d-r-n1}%
}%\hspace{-0.35cm}
\subfigure[]{%
\includegraphics[width=0.25\textwidth]{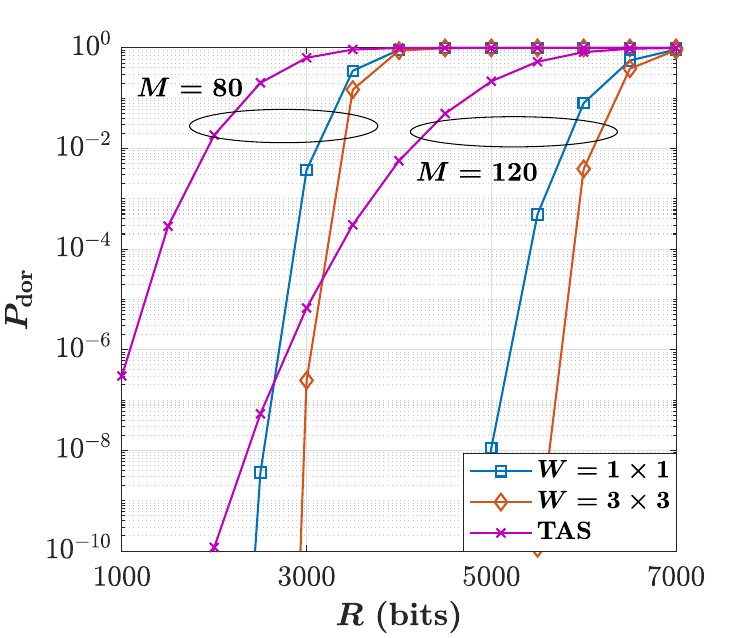}\label{fig-d-r-w1}%
}%\hspace{-1cm}
\vspace{-0.2cm}
\caption{DOR versus (a) the bandwidth $B$ for a fixed $W=1\lambda\times1\lambda$, (b) the bandwidth $B$ for a fixed $N=25$, (c) the amount of transmitted data $R$ for a fixed $W=1\lambda\times1\lambda$, and (d) the amount of transmitted data $R$ for a fixed $N=25$. }\label{fig-dor}
%\vspace{-0.5cm}
\end{figure*}

\section{Conclusion}
This letter investigated the performance of RIS-aided FAS, where a fixed single-antenna BS sends information to a FAS-equipped user with the help of an optimized RIS. We first characterized the CDF and PDF of the equivalent channel at the user by modeling the spatial correlation between fluid antenna ports via the Gaussian copula. Then we derived the OP and DOR from compact analytical expressions. The numerical results indicated that using the FAS can provide more reliable and low-latency transmission in RIS-aided communication systems when the number of RIS elements is large.
%\vspace{-0.5cm}

%\appendix
\appendices
\section{Spatial Correlation in Planar FAS}\label{app1}
Given the defined size and number of ports for the fluid antenna, we consider the position of the $n$-th port as $\vec{n}_n=\left[0, \frac{n_2-1}{N_2-1}W_2, \frac{n_1-1}{N_1-1}W_1\right]^T$. Assuming a plane wave approaches the fluid antenna surface with azimuth angle $\omega$ and elevation angle $\nu$, the array response vector can be formulated as
\begin{align}
	\vec{a}\left(\omega,\nu\right)=\left[\mathrm{e}^{j\vec{k}\left(\omega,\nu\right)^T\vec{n}_1},\dots,\mathrm{e}^{j\vec{k}\left(\omega,\nu\right)^T\vec{n}_{N}}\right]^T,
\end{align}
where $\vec{k}\left(\omega,\nu\right)=\frac{2\pi}{\lambda}\left[\cos\left(\nu\right)\cos\left(\omega\right), \cos\left(\nu\right)\sin\left(\omega\right), \sin\left(\nu\right)\right]^T$ represents the wave vector.

Next, by denoting the normalized spatial correlation matrix $\vec{R}\in\mathbb{C}^{N\times N}$ as $\vec{R}=\mathbb{E}\left\{\vec{a}\left(\omega,\nu\right),\vec{a}\left(\omega,\nu\right)^H\right\}$, the $\left(n,\tilde{n}\right)$ entry of $\vec{R}$ can be expressed as $\varpi_{n,\tilde{n}}=\mathbb{E}\left\{\mathrm{e}^{j2\pi\vec{k}\left(\omega,\nu\right)^T\left(\vec{n}_n-\vec{n}_{\tilde{n}}\right)}\right\}$. For a three-dimensional (3D) isotropic scattering environment over the half-space, we have $f\left(\omega,\nu\right) =\frac{\cos\left(\nu\right)}{2\pi}$ when $\omega\in\left[-\frac{\pi}{2},\frac{\pi}{2}\right]$ amd $\nu\in\left[-\frac{\pi}{2},\frac{\pi}{2}\right]$ \cite{bjornson2020rayleigh}. Hence, utilizing Euler's formula, $\varpi_{n,\tilde{n}}$ can be determined as\vspace{-0.1cm}
\begin{align}
\varpi_{n,\tilde{n}}&=\int_{-\pi/2}^{\pi/2}\int_{-\pi/2}^{\pi/2}\mathrm{e}^{j\frac{2\pi}{\lambda}\Vert\vec{n}_n-\vec{n}_{\tilde{n}}\Vert\sin\left(\nu\right)}f\left(\omega,\nu\right)\mathrm{d}\nu\mathrm{d}\omega\\
&=\frac{\sin\left(\frac{2\pi}{\lambda}\Vert\vec{n}_n-\vec{n}_{\tilde{n}}\Vert\right)}{\frac{2\pi}{\lambda}\Vert\vec{n}_n-\vec{n}_{\tilde{n}}\Vert},\label{eq-app1}
\end{align}
where \eqref{eq-app1} is equal to \eqref{eq-cov} and the proof is accomplished.  \vspace{-0.2cm}
\bibliographystyle{IEEEtran}
%\bibliography{sample.bib}
% Generated by IEEEtran.bst, version: 1.14 (2015/08/26)

\end{document}